\newtheoremstyle{theorem}{}{}{\itshape}{}{\bfseries}{.}{.5em}{}
\theoremstyle{theorem}
\newtheorem{theorem}{Theorem} 
\newtheorem{proposition}[theorem]{Proposition}
\newtheoremstyle{definition}{}{}{}{}{\bfseries}{.}{.5em}{}
\theoremstyle{definition}
\newtheorem{definition}[theorem]{Definition}
\newtheorem{assumption}[theorem]{Assumption}
\newtheoremstyle{remark}{}{}{}{}{\bfseries}{.}{.5em}{}
\theoremstyle{remark}
\newtheorem{remark}[theorem]{Remark}
\newtheorem{example}[theorem]{Example}
\numberwithin{subsection}{section}
\numberwithin{paragraph}{subsection}
\renewcommand{\div}{\textrm{\,\,div\,}}
\newcommand{\Ker}{\textrm{Ker\,}}
\newcommand{\Ima}{\textrm{Im\,}}
\newcommand{\us}{\underset}
\newcommand{\os}{\overset}
\newcommand{\mc}[1]{{\mathcal #1}}
\newcommand{\mtt}[1]{{\mathtt #1}}
\newcommand{\mf}[1]{{\mathfrak #1}}
\newcommand{\bb}[1]{{\mathbb #1}}
\newcommand{\rt}[1]{\textcolor{red}{#1}}
\newcommand{\vphi}{\varphi}
\newcommand{\p}{\partial}
\newcommand{\e}{\emph}
\newcommand{\veps}{\varepsilon}
\renewcommand{\hat}{\widehat}
\renewcommand{\j}{\jmath}
\definecolor{light}{gray}{.9}
\title[Discrete  calculus with cubic cells on discrete manifolds]{Discrete  calculus with cubic cells on discrete manifolds  }
\author[L.\ De Carlo]{Leonardo De Carlo}
\address{Leonardo De Carlo\hfill\break\indent Departamento de Matemática, Instituto Superior Técnico\hfill\break\indent Av. Rovisco Pais 1049-001 LISBOA Portugal}\email{neoleodeo@gmail.com}
\begin{document}

\begin{abstract}

This work is thought as an operative guide to "discrete exterior calculus" (DEC), but at the same time with a rigorous exposition. We present  a  version of  (DEC) on "cubic" cell, defining it  for discrete manifolds. An example of how it works, it is done on the discrete torus, where usual Gauss and Stokes theorems are recovered.

\bigskip

\noindent {\em Keywords}: Discrete Calculus, discrete differential geometry

\smallskip

\noindent{\em AMS 2010 Subject Classification}: 97N70
\end{abstract}

\maketitle
\thispagestyle{empty}




\section{Introduction}

Discrete exterior calculus (DEC) is  motivated by
potential applications in computational methods for field theories (elasticity, fluids, electromagnetism) and in
areas of computer vision/graphics,  for these applications see for example   \cite{DeGDT15}, \cite{DMTS14} or   \cite{BSSZ08}.  DEC is developing as an alternative approach for computational science to the usual discretizing process from the continuous theory. It considers the discrete mesh as the only  thing given and develops an entire calculus using only discrete combinatorial and geometric
operations. The derivations may require that the objects on the discrete mesh, but not the mesh itself, are
interpolated as if they come from a continuous model. Therefore DEC could have interesting applications in fields where there isn't any continuous underlying structure as Graph theory \cite{GP10} or problems that are inherently discrete since  they are defined on a lattice \cite{AO05} and it can stand in its own right as theory that parallels the continuous one.  The language of DEC is founded on the concept of discrete differential form, this characteristic allows to preserve in the discrete context some of the usual geometric and topological structures of continuous models, in particular the Stokes'theorem 
\begin{equation}\label{eq:Stokes}
\int_\mc{M} d\omega=\int_{\partial\mc{M}}\omega.
\end{equation}
 Stating that a differential form $\omega$ over the boundary of some orientable manifold  $\mc{M}$ is equal to the integral of its exterior derivative $d\omega$ over the whole of $\mc{M}$. Equation (\ref{eq:Stokes}) can be considered the milestone to define the discrete exterior calculus since it contains the main objects to set a discrete exterior calculus, namely the concepts of discrete differentials form, boundary operator and  discrete exterior derivative, moreover it is very natural in the discrete setting.
A   qualitative review for DEC   is  \cite{DKT08}, while to have a deeper view we suggest \cite{Cra15} and \cite{Hir03}. 

\vspace{0.5cm}
First we are establishing the necessary objects to introduce the discrete analogous of differential forms, i.e. the discrete exterior derivative $d$ and its adjoint operator $\delta$ in the context of a cubic cellular complex for a  lattice on a discrete manifold. In this way we can  state the Hodge decomposition. This is done in section \ref{sec:DEC}, here the ideas for the construction of DEC  come from \cite{DKT08} and \cite{DHLM05}, but respect to this works we  present more precise definitions for a manifold setting and   cubic cells instead of simplexes, the concepts of comparability, consistency  and local orientation are introduced to define rigorously what is a  discrete manifold. Here we are not exposing  the continuous theory, good references for a treatment of Geometry with differential forms (and its application) to compare with DEC are \cite{AMR88,Fla89} and \cite{Fra12}.  
Then, in section \ref{sec:disop}, we are illustrating how these operators work up to dimension three using as manifold the discrete torus.

\section[DEC on cubic mesh]{Discrete exterior calculus on cubic mesh}\label{sec:DEC}

Intuitively, \e k-differential forms are objects that can be integrated on a \e k dimensional region of the space. For example 1-forms are like $dF=f(x)dx$ or $dG=\frac{\partial G}{\partial x}dx+\frac{\partial G}{\partial y}dy+\frac{\partial G}{\partial z}dz$,  which can be integrated respectively over a interval in $\mathbb{R}$ or  over a curve in $ \mathbb{R}^3 $. With this idea in mind discrete differential forms are going to be defined. As we said we are working  with an abstract cubic complex, instead of a simplicial one. This abstract complex can be tought as a collection of discrete sets of maximal  dimension  $ n $. This collection could be previously derived  from  a continuous structure on a manifold $\mc{M}$ of dimension $n$.

\subsection{Primal cubic complex and dual cell complex}

The next definitions fix in an abstract way  the objects on which DEC operates, the language is the typical one in  algebraic topology \cite{Mun84}.
\begin{definition}\label{d:simdef}
	A \emph{k-simplex} is the convex span $\mf s_k=\{v_0,v_1,\dots,v_k\}$ of $k+1$ geometrically independent points of $ \mathbb{R}^N $ with $ N\geq k $, they are called \emph{vertices of the k-simplex} and  $ k $ its  \emph{dimension}.  A simplex  $ s_k=(v_0,v_1,\dots,v_{k+1}) $ is \e{oriented}   assigning one of the two possible   equivalence classes of   ordering of its vertices $ v_i $. Two orderings are in the same  class if they differ for an even permutation, while they are not for an odd permutation.  The \e k-simplex with same vertices but different ordering from $  s_k $ is said to have \e{opposite orientation} and   denoted with $ -s_k $.  
\end{definition}

One orientation of a simplex can be called conventionally \e{positive} and the opposite one  \e{negative}. 
\begin{definition}[Orientation convention\footnote{This convention tell us the following. We consider $ \mathbb{R}^k $  with a right handed orthonormal basis $ e_1,\dots, e_k $. A simplex $s_1=(v_0,v_1)$ embedded in $ \bb R^1 $  can take orientation from $v_0$ to $v_1$, let's call it positive assuming $ v_1>v_0 $, otherwise from $v_1$ to $v_0$, that is negative. A simplex $s_2=(v_0,v_1,v_2)$ embedded in $ \bb R^2 $  can take anticlockwise orientation with the normal pointing outside the plane along the "right hand rule", let's call it positive,  otherwise clockwise with the normal pointing outside the plane along the "left-hand rule", that is negative. A simplex $s_3=(v_0,v_1,v_2,v_3)$ embedded in $ \bb R^3 $ can take  orientation along the "screw-sense" about the simplex embodied in the familiar "right-hand rule", let's call it positive, otherwise  orientation along the "left-hand rule", that is negative.} for simplexes]\label{d:orsim}
	A way to define the sign of a \e k-simplex $ s_k=(v_0,\dots,v_{k+1}) $ is that of embedding it  in $ \mathbb{R}^k $ equipped with a  right handed orthonormal basis and saying it is positive oriented  if $ \det(v_1-v_0,v_2-v_0,\dots,v_{k+1}-v_0)>0 $ and negative in the opposite case.
\end{definition}

"Inside" a \e k-simplex we can individuate  some proper simplexes, we need to define this and how they relate with the "original" one.
\begin{definition}
	A \emph{ j-face} of a \e k-simplex is any \emph{j}-simplex ($j<k$) spanned by  a proper  subset of  vertices of $ \mf s_k $, this gives a strict  partial order relation $\mf s_j\prec \mf s_k$ and if $ \mf s_j $ is a face of $ \mf s_k $ we denote it $ \mf s_j (\mf s_k)$. A \e j-face is \e{shared} by two \e k-simplexes  $ (j<k) $ if it is a face of both.
\end{definition}

We will need to say    when it is possible and how to compare two \e k-simplexes (of the same dimension), i.e. their reciprocal orientations. The idea of next definition is that this is possible when there exists an hyperplane where  both the \e k-simplexes lie.

\begin{definition}\label{def:consistency}
	We say that two   \e k-simplexes in $ \bb R^N $ with $ N\geq k $ are \e{comparable} if they belong to the same \e k-dimensional hyperplane. Moreover, we say that two comparable oriented simplexes are \e{consistent} if they have same  orientation sign.
\end{definition}

The orientation sign refers to definition \ref{d:orsim}. The condition for two \e k-simplexes to be in the same hyperplane is equivalent   to ask that   any convex span of $ k+2 $ points  chosen from the union of their vertices is not a \e {(k+1)}-simplex. Another relevant  concept in DEC is that one of  induced orientation, that is the orientation that  a face of a simplex inherited when the last one is oriented.  

\begin{definition}\label{def:indor}
	We call \emph{induced orientation} by an oriented \emph k-simplex $ s_k $ on a \e j-face $ s_j(s_k) $  the corresponding  ordering of its vertices in the sequence ordering the vertices   of $ s_k $. If  two \e k-simplexes $ s_k $ and $ s'_k $ induce opposite orientations on a shared \e j-face we say that the \e j-face \e{cancels}.
\end{definition}

Now we introduce the definition of \e k-cube.

\begin{definition}\label{def:s-dec}
	A \e k-cube $\mf c_k=\{v_0,v_1,\dots,v_{2^k-1}\}$  is the convex span of $ 2^k $ points of $ \bb R^N $ with $ N\geq k $ such that there exist $ k! $ different \e k-simplexes having their $ k+1 $ vertices chosen between the \e{vertices} $ v_i $ of $ \mf c_k $ and sharing two by two only one \e {(k-1)}-face. Moreover, vertexes are extremal\footnote{Extremal means that a vertex  can not be written as convex combination of the other vertexes.} points of the convex combination. Each one of these $ k! $ simplexes $ \mf s^i_k $ is said \e{a proper  k-simplex of $ \mf c_k $} and we denote it $ \mf s^i_k(\mf c_k) $ where $ i\in\{0,1,\dots,k!\} $. The dimension of $ \mf c_k $ is $ k $. Note that there is more than one  way to choose these $ k! $ proper \e k-simplex and we call each of them a \e{simplicial decomposition of $ \mf c_k $} denoted $  \varDelta \mf c_k= \os{k!}{\us{i=1}{\cup}}\mf s^i_k $.
\end{definition}
When we don't need to specify the index $ i $ of the these internal simplexes we  omit it. 
\begin{remark}
	A shared \e{(k-1)}-face $ \mf s_{k-1}(\mf s^i_k(\mf c_k))=\mf s_{k-1}(\mf s^{i'}_k(\mf c_k)) $( with $ i\neq i' $) is inside $ \mf c_k $ and not on its boundary. A precise definition of boundary for simplexes and cubes will be given later.
\end{remark}

\begin{definition}
	A \emph{ j-face} of a \e k-cube is any \emph{j}-cube ($j<k$) spanned by  a proper  subset of  vertices of $ \mf c_k $ and not intersecting its interior, this gives a strict  partial order relation $\mf c_j\prec \mf c_k$ and if $ \mf c_j $ is a face of $ \mf c_k $ we denote it $ \mf c_j (\mf c_k)$. A \e j-face is \e{shared} by two \e k-cubes  $ (j<k) $ if it is face of both.
\end{definition}

The concept of orientation for \e k-cubes follows  from that one for \e k-simplexes.

\begin{definition}\label{d:cubor}
	A \e k-cube $ c_k=(v_0,v_1,\dots, v_{2^k-1}) $ is \e{oriented} assigning to each  \e k-simplex  $ \mf s_k^i $ in a simplicial decomposition of $ \mf c_k $ an orientation such that  the  \e{(k-1)}-faces $ s_{k-1}(s^i_k(c_k)) $ that they share cancel\footnote{Namely,  on them, it is induced an opposite orientation,  see definition \ref{def:indor}.}. Two  \e{oriented simplicial decompositions} are in the same equivalence class of orientation if any two comparable  not shared  \e {(k-1)}-simplexes $ s_{k-1}(s^i_k(c_k)) $ (i.e. lying on the same \e{(k-1)}-face on the boundary of $ \mf c_k $) are consistent. We denote $ \varDelta  c_k=\os{k!}{\us{i=1}{\cup}}s^i_k $ an oriented simplicial decomposition.
\end{definition}

\begin{definition}
	Analogously to definition \ref{def:consistency} for simplexes, two \e k-cubes are \e{comparable} if they lie in the same \e k-dimensional hyperplane, while we say that they are \e{consistent} if the \e k-simplexes of the two simplicial decompositions are consistent.
\end{definition}
It is enough to check the consistency between  any \e k-simplex in the simplicial decomposition of one of the two \e k-cube and any \e k-simplex in the decomposition of the other \e k-cube because of definition \ref{d:cubor}.

\begin{example}
	Consider the  2-cube $\mf c_2=\{v_0,v_1,v_2,v_3\} $. A simplicial decomposition is given by  $ s^A_2=(v_0,v_1,v_3) $ and $ s^B_2=(v_1,v_2,v_3) $. Indeed let $ \mf s_1=\{v_1,v_3\} $ be the 1-simplex shared by $ s^A_2 $ and $ s^B_2$, then they cancel on $ \mf s_1 $ because $ s_1(s^A_2)=(v_1,v_3)$ and $s_1(s^B_2)=(v_3,v_1)=-(v_3,v_1) $. Another decomposition is that one given by $  s^C_2=(v_0,v_2,v_3) $ and $  s^D_2=(v_0,v_1,v_2) $. These two decompositions  are in the same equivalence class  because they induce on the 1-simplexes  $ \{v_0,v_1\} $, $\{v_1,v_2\} $, $ \{v_2,v_3 \} $ and $\{v_3,v_0\} $ consistent orientations.  
\end{example}

\begin{proposition}\label{p:cubor}
	There are only two possible equivalence classes of orientation\footnote{ [Orientation convention for cubes]The result of this simplicial decomposition is that also a \e k-cube has only two possible orientations.
		In one dimensions a 1-cube is also a 1-simplex;
		in two dimensions a 2-cube $(v_0,v_1,v_2,v_3)$ can be anticlockwise oriented (positive) with the normal pointing outside the plane along the "right hand rule" or clockwise oriented (negative) otherwise clockwise with the normal pointing outside the plane along the "left-hand rule";
		in three dimensions a 3-cube $(v_0, \dots,v_7)$ can have, looking at it from outside,  all the faces  anticlockwise oriented (positive) or viceversa all clockwise (negative). These two possibilities corresponds respectively  to have all the  normals to its faces pointing outside or inside the volume.} for a k-cube $ \mf c_k $, when an orientation $ c_k $ is assigned the other one is denoted $ -c_k $. One orientation can be  conventionally defined to be positive and the other negative. 
\end{proposition}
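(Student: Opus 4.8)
The plan is to reduce the statement about $k$-cubes to the already-established dichotomy for $k$-simplexes (Definition~\ref{d:orsim}), using the fact that an orientation of $\mf c_k$ is, by Definition~\ref{d:cubor}, nothing more than a coherent choice of orientations on the simplexes of one simplicial decomposition. First I would fix one simplicial decomposition $\varDelta\mf c_k=\bigcup_{i=1}^{k!}\mf s^i_k$ of $\mf c_k$, and observe that the ``dual graph'' of this decomposition—vertices the simplexes $\mf s^i_k$, edges the shared interior $(k-1)$-faces—is connected. This connectedness is essentially the content of Definition~\ref{def:s-dec}: the $k!$ simplexes are glued two-by-two along single $(k-1)$-faces in the pattern that geometrically reconstitutes the solid cube, so one can walk from any $\mf s^i_k$ to any $\mf s^{i'}_k$ through a chain of shared faces.

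Next I would argue that once an orientation is assigned to a single simplex $\mf s^1_k$ in the decomposition, the cancellation requirement in Definition~\ref{d:cubor} forces the orientation of every adjacent simplex: a shared $(k-1)$-face must receive opposite induced orientations from its two neighbours, and since a $k$-simplex has exactly two orientations (the already-known simplex dichotomy), the neighbour's orientation is uniquely determined. Propagating this along the connected dual graph shows that the oriented simplicial decomposition is completely determined by the choice of orientation on $\mf s^1_k$, of which there are exactly two. The only thing to check is consistency of the propagation—i.e. that going around a cycle in the dual graph does not produce a contradiction—but for the cube the dual graph can be taken to be a tree (or one checks that the ``local orientation'' is well defined because the simplexes all sit inside one $k$-dimensional hyperplane, so the parity condition around any closed loop is automatically satisfied). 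This gives at most two equivalence classes arising from one fixed decomposition.

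Then I would show these two classes are genuinely distinct and that the count is independent of the chosen simplicial decomposition. Distinctness: the two propagations differ on $\mf s^1_k$, hence on comparable not-shared boundary $(k-1)$-simplexes they induce opposite (inconsistent) orientations, so by Definition~\ref{d:cubor} they are not in the same equivalence class. Independence of the decomposition: given a second decomposition $\varDelta'\mf c_k$, any of its simplexes is comparable to the simplexes of the first decomposition (all lie in the same $k$-hyperplane as $\mf c_k$), and the equivalence relation of Definition~\ref{d:cubor} is phrased precisely in terms of consistency on the boundary $(k-1)$-faces of $\mf c_k$, which are intrinsic to $\mf c_k$ and shared by all decompositions; hence the two boundary-orientation patterns one gets are the same two patterns regardless of decomposition, as the example preceding the proposition already illustrates for $k=2$. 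Finally, declaring one class positive and the other negative, and writing $-c_k$ for the class opposite to $c_k$, is just notation.

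The main obstacle I expect is the bookkeeping in the propagation step: making rigorous that the dual graph of a simplicial decomposition of $\mf c_k$ is connected and that the orientation can be coherently propagated without running into a sign obstruction around cycles. I would handle this by noting that all the simplexes are comparable (they live in the single $k$-hyperplane spanned by $\mf c_k$), so ``consistent'' is a transitive relation among them via the determinant sign of Definition~\ref{d:orsim}; transitivity of consistency is exactly what kills any potential cycle obstruction, reducing the whole argument to the binary choice on one simplex. Everything else—distinctness of the two classes, independence of the decomposition—is then a routine unwinding of the definitions.
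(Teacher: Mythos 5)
Your proposal is correct and follows essentially the same route as the paper: orient a single simplex of a fixed simplicial decomposition, then use the cancelling condition on shared $(k-1)$-faces to propagate that orientation to all the other simplexes, so that the binary choice on one simplex yields exactly two equivalence classes. The paper's own proof is a one-sentence version of this; your additional checks (connectedness of the dual graph, absence of a cycle obstruction via comparability in the common hyperplane, and independence of the chosen decomposition) fill in details the paper leaves implicit rather than taking a different approach.
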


\begin{proof} Once a single simplex $ \mf s_k(\mf c_k) $ is oriented, the proposition follows from the facts that  a simplex can be oriented in only two ways because the cancelling condition on  the shared \e{(k-1)}-faces in a simplicial decomposition force all the others to assume an orientation propagating to the entire cubic cell. 
\end{proof}

Now the concept of induced orientation for simplexes can be transferred to cubes.

\begin{definition}\label{d:indorcub}
	We call \emph{induced orientation} by an oriented \emph k-cube\footnote{We saw in definition \ref{d:cubor} and proposition \ref{p:cubor} that  a \emph k-cube $ \mf c_k $ is oriented trough the orientation of  the \emph k-simplexes of its decomposition, hence we talk equivalently of  orientation  induced by the oriented \emph k-simplexes of the simplicial decomposition of $ c_k $.}  $ c_k $ on a \e j-face   $ c_j(c_k) $ the orientation assigned on it by the inductively oriented  \emph{j}-simplexes $ s_{j}(s_k(c_k)) $  of its simplicial decomposition. If  two \e k-cubes induce opposite orientations on a shared \e j-face we say that the face \e{cancels}.
\end{definition} 

We do an example considering the  oriented 2-cube   $c_2=(v_0,v_1,v_2,v_3)$. Let be  $ s^A_2=(v_0,v_1,v_3) $ and $ s^B_2=(v_1,v_2,v_3) $ the 2-simplexes  orienting a simplicial decomposition of  $ c_2 $. The inductively oriented 1-faces  $ c_1(c_2) $ are $ (v_0,v_1) $,$ (v_1,v_2) $, $ (v_2,v_3) $ and $(v_3,v_0) $.

Our intention is to operate on object made by many cubes, like lattice. So we introduce collections of cubes suitable to define a discrete calculus.  Later we will restrict ourselves to the case of discrete manifold. 

\begin{definition}
	A \emph{cubic complex} $\mc{C}$  of \e{dimension n} is a finite collections of  \emph{elementary} cubes $ \mf c_k $, called also \emph{cells},  such that $ 0 \leq k\leq n $, every face of an elementary cube  is in $\mc{C}$ and the intersection of any two cubes of $\mc{C}$ is either empty or a face of both. To each cubes is  assigned  an orientation $ c_k $.  The \e{local orientation} of $ \mc C $ is the orientation of the cubes of dimension $ n $. We denote with $ |\mc C| $ the topological set of $ \mathbb{R}^N $ ($ n\leq N $) given by the union of all \e n-cubes $ \mf c_n $ in $ \mc C $. 
\end{definition}

Fixing an  orientation of the \emph n-cube is like orienting the  \e n-volume of the \e n-hyperplane containing it or equivalently the \e n-volume of the space  $ \mathbb{R}^n $ where it can be embedded. For a  discrete manifold is like to orient  the tangent space for a continuous one.
The meaning of having finite collections of cubes is that  to deal with compact sets in the continuous case.

In particular we are interested in cubic complexes which are  discrete version of an orientable compact  boundaryless manifold $ \mc M $ of dimension $ n $. The idea to define a discrete manifold of dimension $n$ is that to have a cubic complex (in local sense)   topologically equivalent  to a \e n-ball in $\mathbb{R}^n$. Moreover we want to orient a discrete manifold, this is possible using the "cancelling" notion of definition \ref{d:indorcub}.

\begin{definition}\label{def:dMan}
	A \emph{cubic complex} $\mc{C}$  of dimension \e n is a  \emph{discrete manifold} (boundaryless) if every  \e{(n-1)}-cube is shared exactly by two \e n-cubes. A manifold is \e{orientable} if the orientations of all \e n-cubes can be chosen such that  every shared \e{(n-1)}-face cancels.
\end{definition}

The meaning of this definition is that  if we consider a \e{(n-1)}-cube  ${\mf c}_{n-1}\in\mc{C}$ the set $\us{\mf c_n:\,\mf c_{n-1}\in \,\mf c_n}{\cup}\mf c_n$   is  simply connected and homeomorphic to a unit \e n-dimensional ball.

In DEC  important concepts are the ones of dual cell and dual complex. To define a centre of a cube we use barycentric coordinates,  while in \cite{DHLM05} they use the concept of circumcentre for a simplicial complex. This last concept is very simple in the simplicial case but for our case we should introduce the concept of Voronoi diagram \cite{JosTheo13} and we prefer to avoid this.

\begin{definition}\label{def:centre}
	The \emph{centre} of a \emph k-cube $\mf c_k$ is the barycentre of its vertices, denoted  with $\mathtt{b}(\mf c_k)$ $  $. 
\end{definition}

\begin{remark}
	The union $ c_k\cup c_k'/s_k\cup s_k' $ of two comparable consistent \e k-cubes/\e k-simplexes sharing a \e{(k-1)}-face inherits an orientation that is consistent with that one of $ c_k/s_k $ and $ c_k'/s'_k $, in the sense that any \e k-simplex that  can be generated by the vertices in the union and  properly contained in it is defined to be consistent with  $ c_k/s_k $ and $ c'_k/s'_k $. 
\end{remark}

	\begin{definition}
		We define the operation +  as  $  \mf c_k+\mf c'_k:=\mf c_k\cup \mf c'_k $,   when $ \mf c_k=\mf c'_k $ we set  $ \mf c_k+\mf c_k=2 \mf c_k:={\mf c_k\cup \mf c_k} $ as multiset\footnote{ Multisets are sets that can differentiate for  multiple instances of the same element, for example $ {a,b} $ and $ {a,a,b} $ are different multisets.} $ \{\mf c_k,\mf c_k\} $. For oriented cubes $ c_k $ we define an analogous + operation and in addition we define the inverse element, that is  $ c_k+(-c_k)=\emptyset $. The same operation is defined for \e k-simplexes substituting the occurrences of $ \mf c_k $ and $ c_k $ respectively with $ \mf s_k $ and  $ s_k  $.
\end{definition}

The dual of a \emph k-cube, called \emph{dual cell}, is derived from the  duality operator (see next definition \ref{def:dualop}) $\ast:c_k\rightarrow \ast(c_k)$ and   the set of dual cells of a cubic complex will be the \emph{dual complex} $\ast \mc{C}$. Remember that when two \e k-cubes induce two opposite orientation on a shared  \e{(k-1)}-face this last one cancels.

{\begin{definition}\label{def:dualop}
	For a discrete manifold $ \mc C $ of dimension $ n $  the \emph{ duality map} acts on a \emph k-cube $ \mf c_k $ giving  the   \e{(n-k)}-\e{dual cell} $ \ast(\mf c_k) $ obtained with the following union of  (\e{n-k})-simplexes 
	\begin{equation}\label{eq:*op}
	\ast(\mf c_k)=\us{\mf c_n:\mf c_k\prec \mf c_n}{\bigcup}\,\,\us{ \mf c_k\prec \mf c_{k+1}\prec\dots\prec\ \mf c_n}{\bigcup}\{\mathtt{b}(\mf c_k),\mathtt{b}(\mf c_{k+1}),\dots,\mathtt{b}(\mf c_n)\},
	\end{equation}
	where in both union $ \mf c_k $ is fixed, $ \mf c_n $ varies on the first union and fixed in the second one, while the \e{(n-k-2)}-tuple $ \mf c_{k+1},\dots,\mf c_{n-1} $   vary on the second union according to the rule specified in subscript. 
	For an oriented \e k-cube $ c_k=(v_0,\dots, v_{2^{k-1}}) $ the oriented dual cell $ \ast (c_k) $ is obtained assigning to each \e{(n-k)}-simplex \footnote{These simplexes give a simplicial decomposition for the \e{(n-k)}-cube $ \us{ \mf c_k\prec \mf c_{k+1}\prec\dots\prec\ \mf c_n}{\bigcup}\{\mathtt{b}(\mf c_k),\dots,\mathtt{b}(\mf c_n)\} $.} $\{\mathtt{b}(\mf c_k),\dots,\mathtt{b}(\mf c_n)\}  $  an orientation $(\mathtt{b}(\mf c_k),\dots,\mathtt{b}(\mf c_n)) $ if  the oriented \e n-cube $ (v_0,\dots,v_{k-1},\mathtt{b}(\mf c_k),\dots,\mathtt{b}(\mf c_n)) $ is consistent with the local orientation of $ c_n $ and $- (\mathtt{b}(\mf c_k),\mathtt{b}(\mf c_{k+1}),\dots,\mathtt{b}(\mf c_n))  $ otherwise.  The \emph{dual complex} $*\mc{C}$, or \e{dual discrete manifold} ,is the collection $\{{\ast(c_k)}\}_{c_k\in\mc C} $. 
\end{definition}

\begin{remark}\label{r:asy} 
	The oriented (\e{n-k})-simplexes in the union \eqref{eq:*op} share two by two exactly one \e{(n-k-1)}-face that cancels. 
	In definition \ref{def:dualop} with  the oriented \e n-cubes $ c_n $ and $ -c_n $ we always associate a non-oriented object $ \mtt c(\mf c_n)=c_0  $, while with a 0-cube $ c_0 $ we associate an oriented \e n-cube always consistent with the local orientation. This "asymmetry"  is due to the fact that a vertex  doesn't have an intrinsic orientation. Formally we could fix this considering a vertex $ c_0 $ with two orientations $ \pm c_0 $ such that $*(\pm c_n)=\pm c_0  $ and $ *(\pm c_0)=\pm c_n  $. 
\end{remark}

\subsection{Discrete differential forms and exterior derivative}

At this point we are ready to define  discrete versions of differential forms and Stokes Theorem (\ref{eq:Stokes}). Remind the idea of continuous \emph k-form as  object that can be integrated only on a \e k-submanifold and  defined as a linear map from  \emph k-dimensional sets to $\mathbb{R}$. When  \emph k-dimensional sets are defined on a mesh of a discrete manifold  we call them chains, a linear mapping from  chains to a real numbers is  quite a  natural discrete counterpart of a differential form.

\begin{definition}
	Let $\mc{C}$ be a cubic complex and  $\{c^i_k\}_{i\in I_k}$ the collection of all elementary oriented \e k-cubes in $ \mc C $ indexed by $ I_k $. The space of   \emph{k-chains } $C_k(\mc{C})$ is the space with basis $\{c^i_k\}_{i\in I_k}$ of  the finite formal sums $\gamma_k=\us{i\in I_k}{\sum} \gamma_k^i c_k^i$  where the coefficient $ \gamma_k^i $ is an integer.
\end{definition}

In defining the discrete \emph k-forms we are not technical as usual in algebraic topology, we want just to stress that a discrete \emph k-form is a map from the space of \e k-chains to $\mathbb{R}$.

\begin{definition}
	A discrete \emph{k-form} $\omega^k$ is a linear mapping from $C_k(\mc{C})$ to $\mathbb{R}$, i.e.
	\begin{equation}
	\omega^k(\gamma_k)=\omega^k\left(\us{i\in I_k}{\sum}\gamma_k^i c_k^i\right)=\us{i\in I_k}{\sum}\gamma^i_k\omega^k \left(c_k^i\right) .
	\end{equation}
	We add two forms $\omega_1$ and $\omega_2$ adding their values in $\mathbb{R}$, i.e. $
	(\omega_1+\omega_2)\left(\gamma\right)=\omega_1\left(\gamma\right)+\omega_2\left(\gamma\right)$. The vector space of \emph k-forms is denoted $\Omega^k(\mc{C})$.
\end{definition}
Any discrete \e k-form can be written as finite linear combination respect to a basis $\{\alpha^k_i\}_i$ with   same cardinality of $\{c_k^i\}_i$ and determined by the relaton $\alpha_i^k(c_k^j)=\delta_{ij}$.
So we have a \emph{natural pairing} between chains and discrete forms, that is the bilinear pairing 
\begin{equation}\label{eq:pairing}
[\omega^k,\gamma_k]:=\omega^k(\gamma_k).
\end{equation}
Writing $\omega^k=\underset{i}{\sum}\omega^k_i\alpha^k_i$, here $\omega^k_i$ are real coefficient, the pairing \eqref{eq:pairing} becomes $\underset{i\in I_k}{\sum}\omega^k_ic_k^i$.  So the natural pairing \eqref{eq:pairing} leads to a natural notion of \emph{duality} between chains and discrete forms.
In DEC natural pairing plays the role that integration of forms plays in differential exterior calculus. The two can be related by a discretization procedure, for example in the manifold case, thinking to have a piecewise linear \footnote{In case of non-piecewise linear manifold the discretization process present some technicalities, but it is still possible to give meaning to \eqref{eq:omega^k_d}} manifold that can be subdivided in cubes $\{\sigma_k^i\}_i$ and a differential \emph k-form $\omega^k$, the integration  of $\omega^k$ on each \emph k-cube gives its discrete counterpart $\omega^k_d$, where the subscript \emph d is for discrete, defined as
\begin{equation}\label{eq:omega^k_d}
\omega^k_d(\sigma_k):=\int_{\sigma_k}\omega^k.
\end{equation}
In this way a discrete \emph k-form is  a natural representation of a continuous \emph k-form.
\begin{remark}
	A \emph discrete k-form can be viewed as a \emph{k-field} taking different values on different \emph k-cubes of an oriented cubic complex, e.g. a \emph k-form $\omega^k$ on  a \emph k-cube $c_k$ is such that 
	\begin{equation}\label{eq:omegafield}
	\omega^k(-c_k)=-\omega^k(c_k).
	\end{equation}
	With this in mind, for  a discrete \emph 1-form  we use also the name  \e{discrete vector field}.
\end{remark}

To define a discrete exterior derivative, that will give us a  discrete version of \eqref{eq:Stokes}, we have to introduce a discrete boundary operator. As we did so far the definition for cubes goes trough the one for simplexes.
\begin{definition}
	The \emph{boundary operator} $\partial_k:C_k(\mc{C})\rightarrow C_{k-1}(\mc{C})$ is the linear operator that acts  on an oriented \e k-simplex $s_k=(v_0,\dots,v_{k})$ as
	\begin{equation}
	\partial_k s_k=\partial(v_0,\dots,v_{k})=\us{i=0}{\os{k}\sum}(-1)^i(v_0,\dots,\hat{v}_i,\dots, v_{k}),
	\end{equation}
	where $(v_0,\dots,\hat{v}_i,\dots, v_{k})$ is the oriented \emph{(k-1)}-simplex obtained omitting the vertex $v_i$.  Let $\varDelta c_k$ be a simplical decomposition (see definition \ref{def:s-dec}) of $c_k$, the boundary operator on \e k-cubes acts as   
	\begin{equation}
	\partial_k c_k=\us{s_k\in\varDelta c_k}{\sum}\partial_k s_k.
	\end{equation}
\end{definition}

The boundary of  non-oriented objects is obtained doing the boundary of the correspondent oriented objects  and then considering the resulting sets without orientation.

\begin{example} Consider $c_2=(v_0,v_1,v_2,v_3)$ and $\varDelta c_k=(v_0,v_1,v_3)\cup(v_1,v_2,v_3)$, then $\partial_2 c_2=\partial_2(v_0,v_1,v_3)+\partial_2(v_1,v_2,v_3)=(v_1,v_3)-(v_0,v_3)+(v_0,v_1)+(v_2,v_3)-(v_1,v_3)+(v_1,v_2)=(v_0,v_1)+(v_1,v_2)+(v_2,v_3)+(v_3,v_0)$.
\end{example}
In practice $\partial_k$, applied to $c_k$, gives back the faces of $c_k$ with the orientation induced by $c_k$. 
In other terms this $\partial_k$ extracts the oriented border of  an oriented \emph k-cube. A remarkable property of this operator is that  the border of a border is the void set, therefore
\begin{equation}\label{eq:bofb}
\partial_k\circ\partial_{k+1}=0.
\end{equation}

Now with the duality defined by the  natural paring \eqref{eq:pairing} the time is ripe to introduce the discrete exterior derivative (or coboundary operator) $d^k:\Omega^k(\mc{C})\rightarrow\Omega^{k+1}(\mc{C})$  defined by  duality \eqref{eq:pairing} and  the boundary operator.
\begin{definition}\label{d:dder}
	For  a cubic complex $ \mc{C} $ the \emph{ discrete exterior derivative (or coboundary  operator)}    is the linear operator $d^k:\Omega^k(\mc{C})\rightarrow\Omega^{k+1}(\mc{C})$ such that 
	\begin{equation}  \label{eq:def d}
	[ d^k\omega^k,c_{k+1}]  = [ \omega^k,\partial_{k+1}c_{k+1}] 
	\end{equation}
	where  $\omega^k\in\Omega^k(\mc{C})$ and $c_{k+1}\in C_{k+1}(\mc{C})$. Moreover we set $ d\Omega^{n}(\mc C)=0 $. Definition \eqref{eq:def d}  is equivalent to
	$ d^k(\omega^k):=\omega^k\circ\partial_{k+1} $.
\end{definition}
From definition \ref{d:dder} and  \eqref{eq:bofb} it is straightforward the property
\begin{equation}\label{dd=0}
d^{k+1}\circ d^k=0.
\end{equation}
\begin{assumption}
	Unless otherwise specified, we are omitting if  operators are referred to the primal complex or its dual. We assume  the right one at the right moment. Moreover for discrete manifolds in definition \ref{def:dMan} the  operators of this section don't change in the dual.
\end{assumption}

From  definition \eqref{eq:def d} and natural pairing \eqref{eq:omega^k_d}  we have  a discrete  Stokes Theorem: consider a chain $\gamma_k$  and a discrete form $\omega^k$ then
\begin{equation}\label{eq:distokes}
\int_{\gamma_k} d^k\omega^k \equiv [ d^k\omega^k,\gamma_k] =  [\omega^k,\partial_k \gamma_k] \equiv \int_{\partial_k \gamma_k}\omega^k.
\end{equation}

\subsection{Hodge star and codifferential}
The  counterpart of $d^k$, denoted with $ \delta^{k+1} $, mapping  a \e {(k+1)}-form into a \e{k}-form is the tool still missing to have all what we need from DEC. Given two \e k-forms $\omega^k_1$ and $\omega^k_2$, this operator is defined as the adjoint of $d$ with respect to the scalar product 
\begin{equation}\label{eq:*scalar2}
\langle \omega_1^k, \omega_2^k \rangle= \underset{i\in I_k}{\sum}\omega^k_{1,i} \omega^k_{2,i}
\end{equation}
This scalar product is the discrete version of formula \eqref{eq:smoothsp} in footnote \ref{fo:star} below.
\begin{definition}
	The discrete codifferential operator\footnote{\label{fo:star}In the smooth case, for a manifold $ \mc{M} $ of dimension \e n,   the Hodge star is the map  $\star: \Omega_k(\mc{M}) \rightarrow \Omega_{n-k} (\mc{M})$, defined by its local metric and the local scalar product of \e k-forms $\langle\langle \omega^k_1,\omega^k_2 \rangle\rangle=(\omega^k_1)^{i_1,\dots,i_k}(\omega^k_2)_{i_1,\dots,_k}$, such that 
		\begin{equation*}\label{eq:*smooth}
			\omega^k_1\wedge\star\omega^k_2:=\langle\langle \omega^k_1,\omega^k_2\rangle\rangle vol^n,
		\end{equation*}
		where $vol^n$ is the volume form on the manifold. Denoting with $ d $ the exterior derivative for differential forms, this operator can be computed through its action on the the basis of \e k-forms $\mathbf{dx}^k=dx^{i_1}\wedge\dots \wedge dx^{i_k}$ ($i_1<\dots<i_k$), that gives back the forms $\star \mathbf{dx}^k=C\mathbf{dx}^{n-k}=C dx^{i_{k+1}}\wedge\dots \wedge dx^{i_n}$ ($i_{k+1}<\dots<i_{n}$) with $C$ is  such that $\mathbf{dx}^k\wedge C \mathbf{dx}^{n-k}=\langle\langle \mathbf{dx}^k,\mathbf{dx}^{k} \rangle\rangle vol^n=vol^n$.
		On a smooth manifold $ \mc{M} $, for details see chapter 14 in \cite{Fra12} or section 6.2 in \cite{AMR88}, the adjoint operator  $ \delta $ of $ d $ is defined respect to the scalar product 
		\begin{equation}\label{eq:smoothsp}
		\langle \omega^k_1,\omega^k_2\rangle:=\int_{\mc{M}}\omega^k_1\wedge\star\omega^k_2=\int_{\mc{M}}\langle\langle \omega^k_1,\omega^k_2\rangle\rangle vol^n.
		\end{equation}
		Thinking about,  \eqref{eq:*scalar2} is the discrete version of the most right term of the last formula \eqref{eq:smoothsp} and it can be introduced in a sophisticated way that "emulates" the continuous case of this footnote  \ref{fo:star}. Let's try to sketch  this parallelism.
		Since the duality $\ast$ maps a primal cell into an only one dual cell and vice versa,  the most spontaneous thing to set a \e{discrete Hodge star} $ \star $  from \e k-forms into  \e {(n-k)}-forms is doing it from  $ \Omega^{k}(\mc{C}) $ into $ \Omega^{n-k}(\ast\mc{C}) $, i.e. $\star:\Omega^k(\mc{C})\rightarrow\Omega^{n-k}(\ast\mc{C})$, this can be defined with the relation $ (\omega^k,c_k)=(\star\omega^k,\ast c_k) $. With this definition \eqref{eq:*scalar2} can be written as $ \langle\omega^k_1,\omega^k_2\rangle:=\sum_{i\in I_k}(\omega^k_1,c_k^i) (\star\omega^k_2,\ast c^i_k)= \underset{i\in I_k}{\sum}\omega^k_{1,i} \star\omega^k_{2,i} $ that is the equivalent of the middle term in \eqref{eq:smoothsp}. For details how to define a discrete wedge product and the counterpart of the \e n-volume form see section 12 of \cite{DHLM05} and \cite{DMTS14}} $ \delta^{k}:\Omega^k(\mc C)\to\Omega^{k-1}(\mc C) $ is defined by $ \delta^0\Omega^0(\mc C)=0 $ and the equation 
	\begin{equation}\label{eq:codef}
	\langle d^k\omega_1^{k}, \omega_2^{k+1} \rangle= \langle \omega_1^{k}, \delta^{k+1}\omega_2^{k+1} \rangle,
	\end{equation}
	where $ \omega^k_1\in\Omega^k(\mc C) $ and $ \omega_2^{k+1}\in \Omega^{k+1}(\mc C) $.
\end{definition}                                    
Also for this operator we have a property analogous to \eqref{dd=0}, that is 
\begin{equation}\label{eq:deltaodelta}
\delta^k\circ\delta^{k+1}=0.
\end{equation}                             
For a   matrix description of the operators of this chapter see \cite{DKT08}. The operators $ \ast,\partial,d,\star $ and $ \delta $ we introduced on $ \mc C $ can be defined also on the dual complex $ \ast\mc C $. In particular when a discrete manifold is considered and the inverse map of duality map $ \ast $ can be written as  
\begin{equation}\label{e:ast}
\ast^{-1}=(-1)^{k(n-k)}\ast
\end{equation}  where $ \ast $ acts on $ \ast \mc C $ as on $ \mc C $ , i.e. when $ \ast\mc C $ is still  made of \e k-cubes, all definitions applies in the same way. In this case\footnote{\label{f:delta}In this case we have also the formula $ \delta^{k+1}\omega^{k+1}=(-1)^{nk+1}\star d^{n-(k+1)}\star\omega^{k+1} $, which is analogous to the one for $ \delta $ in smooth boundaryless maniflods. This is proved using $ \star^{-1}=(-1)^{k(n-k)}\star $, which follows from \eqref{e:ast} and the definition of $ \star  $ in footnote \ref{fo:star}, see also subsection 5.5 in \cite{DKT08}.} the same notation for the discrete operators is used both on $ \mc C $ and $ \ast \mc C $.

\subsection{Hodge decomposition}
 The original work of the Hodge decomposition theorem for finite dimensional complexes is \cite{Eck44}. The Hodge decomposition in our context is as follows.

\begin{theorem}\label{th:HT}
	Let $ \mc{C} $ be a discrete manifold of dimension n and let $ \Omega^k(\mc{C}) $ be the space of   k-forms on $\mc{C}$. The following orthogonal decomposition holds for all k:
	\begin{equation}\label{eq:dhodd}
	\Omega^k(\mc{C})=d^{k-1}\Omega^{k-1}(\mc{C})\oplus\delta^{k+1}\Omega^{k+1}(\mc{C})\oplus\Omega^k_H(\mc{C}),
	\end{equation}
	where $\oplus$ means direct sum and  $ \Omega^k_H(\mc{C})= \{\omega^k|d^{k}\omega^k=\delta^k\omega^k=0\}$ is the space of harmonic forms.
\end{theorem}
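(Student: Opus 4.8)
The plan is to prove the statement by elementary finite-dimensional linear algebra, exploiting that $\Omega^k(\mc C)$ is finite dimensional and that \eqref{eq:*scalar2} is a genuine positive-definite symmetric inner product on it: one checks $\langle\omega^k,\omega^k\rangle=\sum_{i\in I_k}(\omega^k_i)^2$ vanishes only for $\omega^k=0$. Consequently every subspace $U\subseteq\Omega^k(\mc C)$ admits an orthogonal complement with $\Omega^k(\mc C)=U\oplus U^\perp$. I would first record that, by the defining relation \eqref{eq:codef}, $\delta^{k+1}$ is the adjoint of $d^k$ with respect to these inner products and — since the inner products are symmetric — that $d^k$ is in turn the adjoint of $\delta^{k+1}$.

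The key step is the standard fact that for a linear map $A\colon V\to W$ between finite-dimensional inner product spaces one has the orthogonal splitting $W=\Ima A\oplus\Ker A^{*}$, because $(\Ima A)^\perp=\Ker A^{*}$. Applying this with $A=d^{k-1}\colon\Omega^{k-1}(\mc C)\to\Omega^k(\mc C)$ yields $\Omega^k(\mc C)=d^{k-1}\Omega^{k-1}(\mc C)\oplus\Ker\delta^k$, and applying it with $A=\delta^{k+1}\colon\Omega^{k+1}(\mc C)\to\Omega^k(\mc C)$ (whose adjoint is $d^k$) yields $\Omega^k(\mc C)=\delta^{k+1}\Omega^{k+1}(\mc C)\oplus\Ker d^k$.

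Next I would show that the two images are mutually orthogonal: for $\alpha\in\Omega^{k-1}(\mc C)$ and $\beta\in\Omega^{k+1}(\mc C)$ one has $\langle d^{k-1}\alpha,\delta^{k+1}\beta\rangle=\langle d^k d^{k-1}\alpha,\beta\rangle=0$ by \eqref{dd=0}. Hence $d^{k-1}\Omega^{k-1}(\mc C)\oplus\delta^{k+1}\Omega^{k+1}(\mc C)$ is an internal orthogonal direct sum inside $\Omega^k(\mc C)$; setting $H^k$ to be its orthogonal complement gives the orthogonal decomposition \eqref{eq:dhodd} with $\Omega^k_H(\mc C)$ replaced by $H^k$. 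Finally, intersecting the two splittings above, $H^k=\bigl(d^{k-1}\Omega^{k-1}(\mc C)\bigr)^\perp\cap\bigl(\delta^{k+1}\Omega^{k+1}(\mc C)\bigr)^\perp=\Ker\delta^k\cap\Ker d^k$, which is precisely the space $\Omega^k_H(\mc C)$ of harmonic forms. The extreme cases $k=0$ and $k=n$ are handled by reading $d^{-1}\Omega^{-1}(\mc C)$, $d^n\Omega^n(\mc C)$, $\delta^0\Omega^0(\mc C)$, $\delta^{n+1}\Omega^{n+1}(\mc C)$ as the zero subspace, consistent with the conventions $d^n\Omega^n(\mc C)=0$ and $\delta^0\Omega^0(\mc C)=0$ fixed earlier; the argument then goes through verbatim.

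I do not expect a genuine obstacle: the whole content is the splitting $W=\Ima A\oplus\Ker A^{*}$ together with the identities $d^{k}\circ d^{k-1}=0$ (and dually $\delta^{k}\circ\delta^{k+1}=0$). The only points needing a little care are verifying that \eqref{eq:*scalar2} is positive definite (so that orthogonal complements really are complements) and correctly identifying the adjoints, including at the degenerate endpoints. This is exactly the finite-dimensional analogue of the smooth Hodge theorem, where the corresponding statement instead requires elliptic PDE theory for the Laplacian $d\delta+\delta d$.
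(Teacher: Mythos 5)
Your proof is correct and takes essentially the same route as the paper's: both rest on the finite-dimensionality of $\Omega^k(\mc C)$, the adjointness relation \eqref{eq:codef}, the identity $d^{k}\circ d^{k-1}=0$, and the identification of $\bigl(d^{k-1}\Omega^{k-1}(\mc C)\oplus\delta^{k+1}\Omega^{k+1}(\mc C)\bigr)^\perp$ with $\Ker d^k\cap\Ker\delta^k=\Omega^k_H(\mc C)$. Your packaging via the splitting $W=\Ima A\oplus\Ker A^{*}$ and your explicit attention to positive definiteness and the endpoint conventions are just a tidier presentation of the same argument.
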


\begin{proof}
	Consider  the discrete forms  $ \omega^{k-1}_1\in\Omega^{k-1}(\mc C) $, $ \omega^{k+1}_2\in\Omega^{k+1}(\mc C) $ and $ h^k\in \Omega_H^k(\mc C) $ chosen arbitrary form their respective spaces.
	Since $ \delta^{k+1}  $ is the adjoint of $ d^k $ using  \eqref{dd=0} and \eqref{eq:codef} we have $ \langle d^{k-1}\omega_1,\delta^{k+1} \omega_2 \rangle=\langle d^kd^{k-1}\omega_1, \omega_2 \rangle =0$.  By the definition of $ \Omega^k_H(\mc C) $ and \eqref{eq:codef} we have $ \langle h^k,d^{k-1} \omega^{k-1}_1 \rangle=\langle \delta^k h^k, \omega^{k-1}_1 \rangle=0$, likewise $\langle h^k, \delta^{k+1}\omega^{k+1}_2 \rangle=\langle  d^kh^k, \omega^{k+1}_2 \rangle=0 $. Therefore the spaces $ d^{k-1}\Omega^{k-1}(\mc C)$, $ \delta^{k+1}\Omega^{k+1}(\mc C)$ and $ \Omega^k_H(\mc C) $ are each other orthogonal. A general \e k-form $ \omega^k\in \Omega^k(\mc C)$  belongs to $ \left(d\Omega^{k-1}(\mc{C})\oplus\delta\Omega^{k+1}(\mc{C})\right)^\perp$ if and only if  $ \langle \omega^k,d^{k-1} \omega^{k-1}_1+ \delta^{k+1}\omega^{k+1}_2 \rangle=\langle \delta^k \omega^k, \omega^{k-1}_1 \rangle+\langle  d^k \omega^k, \omega^{k+1}_2 \rangle=0 $ for all $ d^{k-1} \omega^{k-1}_1+ \delta^{k+1}\omega^{k+1}_2\in d\Omega^{k-1}(\mc{C})\oplus\delta\Omega^{k+1}(\mc{C}) $, namely $ d^k\omega^k=\delta^k\omega=0 $. Then  we  showed  $ \left(d\Omega^{k-1}(\mc{C})\oplus\delta\Omega^{k+1}(\mc{C})\right)^\perp=\Omega^k_H$, so the decomposition is complete and generates all $ \Omega^k(\mc C) $.
\end{proof}

\section[Discrete operators  on   $ \mathbb{T}^3 $]{Discrete operators  on  the discrete torus}\label{sec:disop}

In this section we show how the discrete exterior derivative and its adjoint work on a cubic complex of dimension three, namely we work with \e 0-forms, \e 1-forms, \e 2-forms and \e 3-forms, setting discrete equivalent of gradient, curl and divergence operators.
\emph{ We consider a discrete mesh of edge 1 on the discrete torus $ \mathbb{T}^3_N=\mathbb{Z}^3/N\mathbb{Z}^3 $ of side N and we refer explicitly to other dimensions whenever appropiate}.  
Thinking about the bases $dx_i$, $dx_idx_j $ and $ dx_1dx_2dx_3 $ respectively for \e 1-forms, \e 2-forms and \e 3-forms, the parallel of the left   table \ref{tab:forme} between  the smooth case and the discrete case is really natural. For smooth form the action of $ d $ and $ \delta $ can be summarized  as in the right table  \ref{tab:forme}.

\begin{table}
	\centering
	\caption[]{Left table: conceiving continuous  and discrete forms. Right table: $ d $ and $ \delta $ applied to smooth forms.}
	\label{tab:forme}
	\begin{tabular}{|cc|cc|}
		\hline
		\multicolumn{2}{|c|}{smoooth case}&\multicolumn{2}{c|}{discrete case}\\ 
		\hline
		$ \omega^0 $: & scalar field &  $ \omega^0 $: & vertex field \\
		$ \omega^1 $: & vector field &  $ \omega^1 $: & edge field\\
		$ \omega^2 $: & vector  field & $ \omega^2 $: & face field\\
		$ \omega^3 $: & scalar field &  $ \omega^3 $: & cell field\\
		\hline
	\end{tabular}
	\hspace{0.08mm}
	\begin{tabular}{|c|c|c|}
		\hline
		\multicolumn{3}{|c|}{smoooth case}\\
		\hline
		form & $ d $ &  $ \delta $ \\
		\hline
		$ \omega^0 $ &  grad$ \,\omega^0 $ &  0 \\
		$ \omega^1 $ &  curl$ \,\omega^1 $ &  -div$ \,\omega^1 $ \\
		$ \omega^2 $ &  div$ \,\omega^2 $  &  curl$ \,\omega^2 $ \\
		$ \omega^3 $ &   0 & -grad$ \,\omega^3 $ \\
		\hline
	\end{tabular}{ }
\end{table}

Now we are indicating also the discrete operators of last section \ref{sec:DEC} without the index $ k $,  unless otherwise specified, it will be implicit to use the right one according to the \e k-form on which they act. We want to show how to compute $ d $ and $ \delta $. Calculations will be performed in some cases, the others follow mutatis mutandis. We recover also the usual divergence, Gauss and Stokes theorems in our discrete setting. A part proposition \ref{prop:Omega^1_H}, all what  we do in this section for the discrete torus can be done with some extra work and notation for general discrete manifold with non regular mesh (i.e. that can not be defined using the canonical basis $ \{e_1,e_2,e_3\} $).

\subsection{Notation}\label{ss:not}Let  $\{{e_1,e_2,e_3}\} $ be a canonical right handed orthonormal basis.  We define the sets of all the \e{vertices}  
\begin{equation*}\label{key}
	V_N:=\{ x=(x_1,x_2,x_3):x_i=0,e_i,\dots,(N-1)e_i\}.
\end{equation*}
all the oriented \e{edges} $ E_N=\left(\os{3}{\us{i=1}{\bigcup}}E_N^{i,+}\right)\cup \left(\os{3}{\us{i=1}{\bigcup}}E_N^{i,-}\right) $ where 
\begin{equation*}\label{key}
	E_N^{i,\pm}=\{\text{positive/negative oriented \e 1-cubes } e=\pm(x,x+e_i),i\in{1,2,3}\},
\end{equation*} all the oriented \e{faces} $ F_N=\left(\os{3}{\us{i=1}{\bigcup}}F_N^{i,+}\right)\cup \left(\os{3}{\us{i=1}{\bigcup}}F_N^{i,-}\right) $ where 
\begin{equation*}\label{key}
	F_N^{1,\pm}=\{\text{positive/negative oriented  \e 2-cubes } f_1=\pm(x,x+e_2,x+e_2+e_3,x+e_3) \},
\end{equation*} 
\begin{equation*}\label{key}
	F_N^{2,\pm}=\left\{\text{positive/negative oriented \e 2-cubes } f_2=\pm(x,x+e_3,x+e_3+e_1,x+e_1) \right\},
\end{equation*} 
\begin{equation*}\label{key}
	F_N^{3,\pm}=\left\{\text{positive/negative oriented \e 2-cubes } f_3=\pm(x,x+e_1,x+e_1+e_2,x+e_2)\right\}
\end{equation*}
and finally all the oriented \e{cells} $C_N=\left(C_N^{+}\right)\cup \left(C_N^{-}\right)$  where 
\begin{equation*}\label{key}
	C^{\pm}_N:=\left\{\textrm{positive/negative oriented \e 3-cubes } \pm c:\texttt{b}(c)=x+\frac{e_1}{2}+\frac{e_2}{2}+\frac{e_3}{2}\right\}.
\end{equation*}

\begin{remark}\label{r:face}
	Observe that for a face the orientation can be defined as for a \e 2-simplex, see definition \ref{d:simdef} and \ref{d:orsim}, i.e. with the ordering given to the vertices in its sequence.
\end{remark}

We indicate a general  oriented edge,  oriented face and oriented cell respectively with $ e=(x,y) $, $ f $ and $ c $, where $ y=x\pm e_i $ for some $ i $, while for the   non-oriented  ones we use a  calligraphic writing, that is $ \mf e=\{x,y\} $,  $ \mf f $ and $ \mf c $. We indicate in calligraphic also the non-oriented collections just defined above, that is   $ \mc E_N $, $ \mc F_N $ and $ \mc C_N $. 
The collections of \e k-cube defining the discrete Manifolds $ \mc C $  on $ \bb T^3_N $, $ \bb T^2_N $ and $ \bb T_N $ are respectively $ \left\{V_N,E^+_N,F^{+}_N,C_N^+\right\} $, $ \left\{V_N,E^{1,+}_N,E^{2,+}_N,F^{+}_N\right\} $ where $ F^{+}_N $ is defined as $ F^{3,+}_N $ and   $ \left\{V_N,E^{+}_N\right\} $ where $ E_N^+ $ is defined as $ E_N^{+,1} $. 

A general vertex, edge, face and cell field is going to be denoted respectively with $h(x)$, $j(x,y) $,  $\psi(f) $ and  $ \rho(c)$. 

The dual complex $ \ast\mc C $ is constructed on a mesh with the same geometrical structure of  the original one and  obtained translating each vertex of $ (e_1/2,e_2/2,e_3/2) $. We denote with $ \bb T^{n,\ast}_N $ the dual torus  obtained translating the vertexes. So we use an analogous notation and indicate with an index $ * $ the collections and the elements of the dual complex, that is respectively $ V_N^* $, $ E_N^* $, $ F_N^* $ and $ C_N^* $  and $ x^* $, $ e^*=(x^*,y^*) $, $ f^* $ and $ c^* $. 

\begin{remark}\label{r:orob}
	With a vertex $ x\in V_N $ are associated the three edges  $ \{x,x+e_1\}$,$\{x,x+e_2\} $ and $ \{x,x+e_3\} $, consequently we associate with $ x $  also the oriented edges $ \pm (x,x+e_1)$,$\pm(x,x+e_2) $ and $ \pm(x,x+e_3)$. While in two dimension with $ x\in V_N $ are associated $ \{x,x+e_1\}$ and $\{x,x+e_2\} $ and in one dimension only $ \{x,x+e_1\}$.
	
	With a vertex $ x\in V_N $ are associated the three  faces $ \mf f_1,\mf f_2 $ and $ \mf f_3 $ such that    $\mathtt b(\mf f_k)=x+\frac{e_i}{2}+\frac{e_j}{2}  $  where $ i,j,k\in\{1,2,3\} $ and  $ i\neq j \neq k $, consequently we associate with $ x $  also the oriented faces $ \pm f_1$,$\pm f_2 $ and $ \pm f_3$. In two dimensions we don't have a subscript on the faces and only one  face $ \mf f $, defined as $ \mf f_3 $, is associated with  $ x $. 
	
	Finally with a vertex $ x\in V_N $ is associated only one  cell $ \mf c $ such that $ \texttt{b}(\mf c)=x+\frac{e_1}{2}+\frac{e_2}{2}+\frac{e_3}{2} $, consequently the oriented cells $ \pm c $ are associated with $ x $.
\end{remark}

\subsection{\label{ss:d 0to1}$ \mathbf{d:\Omega^0\rightarrow\Omega^1} $} 
Consider $ (x,y)\in E_N $ and compute
$
dh(x,y)=h\circ\partial(x,y)=h(y-x)=h(y)-h(x).
$
Choosing $ (x,y)=(x,x+e_i) $  we define the discrete gradient
\begin{equation}\label{eq:dgrad}
dh(x,x+e_i)=h(x+e_i)-h(x)=:\nabla_i h(x).
\end{equation}
For a \e 1-chain $ \gamma=\os{m}{\us{i=1}{\sum}}(x_k,y_k) $, where $ y_k=x_{k+1} $, setting $ \int_\gamma \nabla h\cdot dl:=\us{k=1}{\os{m}{\sum}}\nabla_{i_k} h(x_{k}) $ we get
\begin{equation}\label{eq:dlineint}
dh(\gamma)=\int_{\gamma}\nabla h\cdot dl=h\circ\partial(\gamma)=\int_{\partial\gamma}h = h(y_m)-h(x_1),
\end{equation}
where $ \p\gamma $ is the boundary of $ \gamma $. This gives us  the discrete versions of  line integral. For completeness we translates in our language the well known result relating gradient fields to  zero integrations on closed paths.
\begin{proposition}\label{prop:prop1form}
	A 1-form $ j(x,y)\in d\Omega^0$ if and only if $ j(\gamma)=\oint_\gamma j=0 $ for all closed path(chain) $ \gamma $ on $ V_N $. Moreover, the vertex function  \begin{equation}\label{eq:h^x(y)}
	h^x(y):=\us{(w,z)\in\gamma_{x\rightarrow y}}{\sum}j(w,z) 
	\end{equation} is such that $ j(y,y')=h^x(y')-h^x(y) $ for every $ (y,y')\in E_N $ and it doesn't depend on the particular path  $ \gamma_{x\rightarrow y} $   from $ x $ to $ y $. Any two functions $ h^{x'}(\cdot) $  and  $ h^x(\cdot) $  differ for  an additive constant.
\end{proposition}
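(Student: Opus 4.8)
The plan is to establish the chain of equivalences and the well-definedness statement by exploiting the discrete Stokes theorem \eqref{eq:distokes} together with the algebraic identity \eqref{eq:bofb}. First I would prove the forward implication: if $j = dh$ for some $h \in \Omega^0$ and $\gamma$ is a closed chain on $V_N$, then $\partial \gamma = 0$ (a closed path has empty boundary), so by \eqref{eq:distokes} we get $j(\gamma) = dh(\gamma) = [dh, \gamma] = [h, \partial\gamma] = h(0) = 0$. For the converse, assuming $j(\gamma) = 0$ on every closed chain, I would fix a base vertex $x$ and \emph{define} $h^x$ by \eqref{eq:h^x(y)}, choosing for each $y$ an arbitrary path $\gamma_{x \to y}$; the key point is that this is independent of the chosen path, because given two paths $\gamma_{x\to y}$ and $\gamma'_{x\to y}$, their difference $\gamma_{x\to y} - \gamma'_{x\to y}$ is a closed chain (this uses that $V_N$ on the torus is connected, so such paths exist), hence $j$ evaluates to zero on it, giving $\sum_{\gamma_{x\to y}} j = \sum_{\gamma'_{x\to y}} j$.

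Next I would verify that this $h^x$ actually satisfies $j(y,y') = h^x(y') - h^x(y)$ for every edge $(y,y') \in E_N$: one takes a path from $x$ to $y$, appends the single edge $(y,y')$ to obtain a path from $x$ to $y'$, and reads off from \eqref{eq:h^x(y)} that $h^x(y') = h^x(y) + j(y,y')$. Combined with \eqref{eq:dgrad}-style reasoning, this shows $j = dh^x \in d\Omega^0$, closing the equivalence. Finally, for the statement about two base points: given $x$ and $x'$, for any $y$ one may write a path $x' \to y$ as the concatenation of a fixed path $x' \to x$ with a path $x \to y$, whence $h^{x'}(y) = h^x(y) + \sum_{(w,z) \in \gamma_{x' \to x}} j(w,z)$, and the correction term is a constant independent of $y$.

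The main obstacle — really the only non-formal point — is making precise the claim that the difference of two paths from $x$ to $y$ is a \emph{closed} chain to which the hypothesis applies, and that enough such paths exist; this rests on the connectedness of the vertex set of $\mathbb{T}^3_N$ and on correctly bookkeeping orientations so that reversing a path corresponds to negating the chain (so that $\partial(\gamma_{x\to y} - \gamma'_{x\to y}) = (y - x) - (y - x) = 0$). Everything else is a direct application of \eqref{eq:distokes} and the definition of $d$; I would keep the exposition at the level of these identities rather than expanding the sums over edges explicitly.
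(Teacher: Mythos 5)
Your proposal is correct and follows essentially the same route as the paper: the forward direction via the discrete Stokes identity \eqref{eq:dlineint}/\eqref{eq:distokes}, the converse by splitting a closed loop through $x$ and $y$ into $\gamma_{x\to y}+\gamma_{y\to x}$ to get path-independence of \eqref{eq:h^x(y)}, the gradient property by appending the single edge $(y,y')$, and the additive constant by concatenating a fixed path $x'\to x$. The only cosmetic difference is that you phrase the forward implication through the abstract pairing $[h,\partial\gamma]$ while the paper uses the explicit telescoping sum $h(y_m)-h(x_1)$, which are the same computation.
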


\begin{proof}
	If $ j(x,y)=dh(x,y) $ from \eqref{eq:dlineint} we get $ \oint_\gamma j =0 $ since $ y_m=x_1 $. Now let's prove the opposite implication. Let $ \gamma $ be a closed path and $ x,y $ any two points on it. Since $ \gamma=\gamma_{x \rightarrow y}+\gamma_{y\rightarrow x} $, from $ \us{(w,z)\in\gamma}{\sum}j(w,z)=\us{(w,z)\in\gamma_{x\rightarrow y}}{\sum}j(w,z)+\us{(w,z)\in\gamma_{y\rightarrow x}}{\sum}j(w,z)=0 $, we have $ \us{(w,z)\in\gamma_{x\rightarrow y}}{\sum}j(w,z)=\us{(w,z)\in\gamma'_{x\rightarrow y}}{\sum}j(w,z) $, where $ \gamma'_{x\rightarrow y}=-\gamma_{y\rightarrow x} $. Hence the function \eqref{eq:h^x(y)} doesn't depend on the particular path from $ x  $ to $ y $ and for any $ (y,y')\in E_N $ its gradient is $ h^x(y')-h^x(y)=\us{(w,z)\in\gamma_{x\rightarrow y}\cup(y,y')}{\sum}j(w,z)-\us{(w,z)\in\gamma_{x\rightarrow y}}{\sum}j(w,z)=j(y,y')$. Taking a closed path $ \gamma=\gamma_{x'\rightarrow y}+\gamma_{y\rightarrow x}+\gamma_{x\rightarrow x'} $ we obtain $ h^{x'}(y)=h^x(y)+\us{(w,z)\in\gamma_{x'\rightarrow x}}{\sum}j(w,z) $  because of $ \oint_\gamma j=0 $.
\end{proof}

\subsection{$ \mathbf{d:\Omega^1\rightarrow\Omega^2} $} 
Let   $ f_k\in F^{k,+}_N$ be the face with index $ k $ that is  associated with  $ x $ as in previous notation in subsection \ref{ss:not}, see remark \ref{r:orob}. We have
$ dj(f_k)=j\circ\partial (f_k)= \us{(x,y)\in \p f_k}{\sum}j(x,y)$, that we write
\begin{equation}\label{eq:dj}
dj(f_k)=\us{(x,y)\in \p f_k}{\sum}j(x,y)=:\int_{\p f_k}j\cdot dl.
\end{equation}
We compute $ dj(f_3) $ as
$
dj(f_3)=j(x,x+e_1)+j(x+e_1,x+e_1+e_2)-j(x+e_2,x+e_1+e_2)-j(x,x+e_2),
$
calling $ \nabla_l j_m(x):= j(x+e_l,x+e_l+e_m)-j(x,x+e_m) $, where $ j_m(x):=j(x,x+e_m) $, the last equation becomes
$ dj(f_3)=\nabla_1 j_2(x)-\nabla_2 j_1(x) $, so  $ dj(f_k) $ is also the curl defined as 
\begin{equation*}\label{eq:dcurl}
	dj(f_k)=\varepsilon^{klm}\nabla_lj_m(x)=:(\mathrm{curl\,}j)(f_k),
\end{equation*}
where $ \veps^{klm} $ is the Levi-Civita symbol summed on repeated indexes with the Einstein convention.
Let $ S $ be the \e 2-chain $ S=\os{m}{\us{i=1}{\sum}} f_i $,  where $ \{f_i\}_{i=1}^{m} $ can be any collection  of  oriented faces in $ F_N $. Setting $ \int_S \textrm{curl}j\cdot d\varSigma:=\us{i=1}{\os{m}{\sum}}(\textrm{curl\, }j)(f_i) $,  we have
\begin{equation}\label{eq:dstokes3d}
\int_S\textrm{curl\,}j\cdot d\varSigma=\int_\Gamma j\cdot dl.
\end{equation}
where $  \Gamma=\os{m}{\us{i=1}{\sum}}\p f_i  $. If all the faces in the collection have same orientation, i.e. $ f_i\in F^\pm_N $ for all $ i\in 1,\dots,m $, the chain $  \Gamma=\os{m}{\us{i=1}{\sum}}\p f_i  $ is the closed path  corresponding to  the inductively oriented boundary of the oriented surface $ S $, indeed   when a edge is  shared by two faces $ f_i $ and $ f_j $ it cancels and  gives two contributions with same modulo but opposite sign in $ \int_{\p f_i}j\cdot dl $ and $ \int_{\p f_j}j\cdot dl $. We obtained in this last case  a  discrete version of the usual Stokes theorem.
\subsubsection*{\quad Two dimensions.}In two dimensions  $ dj(f)=\nabla_1 j_2(x)-\nabla_2 j_1(x) $,  where $ f $ is the face associated with $ x $ as $ f_3 $ in notation and   $ \textrm{curl}j $ is a \e 2-form defined  on the faces  lying on the plane of  $ e_1 $ and $ e_2 $. Proceeding like we did in three dimensions we have the discrete Green-Gauss formula.

\subsection{$ \mathbf{d:\Omega^2\rightarrow\Omega^3} $} Consider the $ c\in C_N^+ $,  we compute and define the divergence $ d $ as follows 
\begin{equation}\label{eq:dpsi}
d\psi(c)=\psi\circ\partial(c)=\us{f\in F_N:f\in \p c}{\sum}\psi(f)=:\div\psi(c).
\end{equation}
As in notation the cell $ c $ is associated with the vertex $ x\in V_N $.  Let $ f_i $ with $ i\in \{1,2,3\} $ the three positive faces  associated with $ x $. We have then
$\us{f:f\in \p c}{\sum}\psi(f)=\us{i=1}{\os{3}{\sum}}\psi(f_i+e_i)-\psi(f_i) $. We define the flow across $ c $ as
\begin{equation}\label{eq:ddiv}
\int_{\partial c}\psi\cdot d\varSigma:=\us{i=1}{\os{3}{\sum}}\Phi(f_i),
\end{equation}
where $ \Phi(f_i):=\psi(f_i+e_i)-\psi(f_i) $. Let $ V $ be the 3-chain $ V=\os{m}{\us{i=1}{\sum}}c_i $, where $ \{c_i\}_{i=1}^{m} $ can be any collection  of  oriented cubes in $ C_N $. Setting $ \int_V \textrm{div\,}\psi \,dx:=\us{i=1}{\os{m}{\sum}}(\textrm{div}\psi)(c_i) $, we have
\begin{equation}\label{eq:ddivth3d}
\int_V\textrm{div\,}\psi\, dx=\int_{S} \psi\cdot d\varSigma,
\end{equation}
where $ S=\os{m}{\us{i=1}{\sum}}\p c_i $. If all the cubes in the collection have same sign orientation, i.e. $ c_i\in C_N^\pm $ for all $ i\in 1,\dots,m $, the chain  $ S=\os{m}{\us{i=1}{\sum}}\p c_i $ is the closed surface corresponding to the  inductively oriented boundary of the oriented volume $ V $, indeed  when a face is  shared by two cells $ c_i $ and $ c_j $ it cancels and  gives two contributions with same modulo but opposite sign in $ \int_{\partial c_i}\psi\cdot d\varSigma $ and $ \int_{\partial c_j}\psi\cdot d\varSigma $. 
We obtained in this last case a discrete version of the usual   divergence theorem.

\bigskip
Now we compute the codifferential operator $ \delta $, this can be done using the fact that it is the adjoint of $ d $ with respect to the scalar product \eqref{eq:*scalar2} or using the formula for $ \delta $ in footnote \ref{f:delta} since \eqref{e:ast} is true. We don't write the computation but we give just the explicit form for $ \delta $.

\subsection{$ \mathbf{\delta:\Omega^3\rightarrow\Omega^2} $}
Let   $ f_i\in F^{i,+}_N$ be the face with index $ i $ that is  associated with  $ x $ as in  notation of subsection \ref{ss:not}, see remark \ref{r:orob}, we have
\begin{equation}\label{eq:co3-2}
\delta\rho(f_i)=\rho(c-e_i)-\rho(c)=\us{c':f_i\in \p c'}{\sum}\rho(c')
\end{equation}
With the help of the dual complex we can recover line integrals analogously to subsection \ref{ss:d 0to1}. To do it we should interpret $ \rho $ as 0-form on the dual vertex $ x^*=\ast c $ of a cell $ c $  (see the operator $ \star $ defined in  footnote  \ref{f:delta})   and $ \delta\rho $ as discrete vector field of gradient type ($ \nabla_i\rho(c):=\rho(c)-\rho(c-e_i) $) on the dual edges $ (x^*,y^*)=*(f_i) $ of the faces  $ f_i\in F^+_N $.  The collection of cells  $ c $ of $ C_N $ that would define the dual chain $ \gamma^* $ equivalent to  $ \gamma $ in \eqref{eq:dlineint} should be such that   any two consecutive  cells $ c_i$,   $c_{i+1} $ in the collection  share a common face that cancels.

\subsection{$ \mathbf{\delta:\Omega^2\rightarrow\Omega^1} $}\label{ss:delta3to2}
On an edge $ (x,y)\in 	E_N $ the codifferential operator is 
\begin{equation}\label{eq:co2-1}
\delta\psi(x,y)=\sum_{f:(x,y)\in \p f}\psi(f).
\end{equation}
Let $\{f_j\}_{j=1}^3 $ be the faces associated with $ x $ and let's set $ \nabla_i\psi(f_j):=\psi(f_j)-\psi(f_j-e_i) $, the  codifferential  can be rewritten $ \delta\psi(x,x+e_k)=\epsilon^{klm}\nabla_l\psi(f_m)=: \textrm{curl}\,\psi(x,x+e_k)$ getting the relation
$
\textrm{curl}\,\psi(x,x+e_k)=\sum_{f:(x,x+e_k)\in f}\psi(f)
$.
The definition of curl as integral on the border of a surface as \eqref{eq:dj} and a the discrete version of the Stokes theorem like \eqref{eq:dstokes3d} can be recovered with the dual complex. We should interpret (see the operator $ \star $ defined in  footnote  \ref{f:delta})   $ \mathrm{curl}\,\psi(x,x+e_k) $ as a 2-form on the dual  face $f^*= *(x,x+e_k) $ and $ \psi(f)  $ as discrete vector fields on the dual edges $ (x^*,y^*)=*f $. With a suitable collection of edges $ \{e_i\}_{i=1}^m $ we can define any desired surface $ S^* $.

\subsubsection*{\quad Two dimensions.} 
Formula \eqref{eq:co2-1} is still true, calling $ f $ the face associated to $ x $,  the codifferential is interpreted as an  orthogonal gradient
\begin{equation}\label{eq:dortgra}
\left(\begin{array}{ll}
\delta\psi(x,x+e_1)\\
\delta\psi(x,x+e_2)
\end{array}\right)
=\left(\begin{array}{ll}
\;\;\;\nabla_2\psi(f)\\
-\nabla_1\psi(f)
\end{array}\right)=:\nabla^\perp \psi(x).
\end{equation}

\subsection{$ \mathbf{\delta:\Omega^1\rightarrow\Omega^0} $}\label{ss:delta1to0}
For a vertex $ x\in V_N $
the codifferential results to be 
\begin{equation}\label{eq:cod1-0}
\delta j(x)=\sum_{y:(y,x)\in E_N}j(
y,x).
\end{equation}
To highlight  the analogous nature of this operator to that of \eqref{eq:dpsi}, we define $ \Phi_i(x):=j(x,x+e_i)-j(x-e_i,x) $. Here $ \Phi_i  $ can be thought as the net flow passing trough the "source/sink" $ x  $.
We define a divergence operator as $ \div j(x):=-\delta j(x) $, i.e.
\begin{equation}\label{eq:cod1-0}
\textrm{div\,}j(x)=\us{y:(x,y)\in E_N}{\sum}j(x,y)=\sum_i\Phi_i(x)
\end{equation}
To recover the usual definition of divergence like \eqref{eq:dpsi} with \eqref{eq:ddiv} and a discrete  divergence theorem like \eqref{eq:ddivth3d} we should interpret the divergence in \eqref{eq:cod1-0} with the help of the dual complex. We have to think about (see the operator $ \star $ defined in  footnote  \ref{f:delta}) $ \mathrm{div} j(x) $ as a 3-form on the cube $c^*= *(x) $ and $ j(x,y)  $ as  2-forms on the faces $ f^*=*(x,y) $. With a suitable collection of vertexes $ \{x_i\}_{i=1}^m $ we can define any desired volume $ V^* $.
\subsubsection*{\quad Other dimensions.} In other dimensions definition \eqref{eq:cod1-0} doesn't change.

\bigskip
\bigskip

We conclude determining the  space of harmonic discrete vector fields on the discrete manifold of $ \mathbb{T}^n_N $. 

\begin{proposition}\label{prop:Omega^1_H}
	Consider    on $ \mathbb{T}^n_N $ the collections $ V_N $ and $ E_N^+ $  in $ \mc C $ defined in notation \ref{ss:not}, then in any dimension  d the harmonic space $  \Omega^1_H(\mc {C}) $ has dimension d and it is generated by the discrete vector fields
	\begin{equation}\label{eq:Omega^1_Hbasis}
	\vphi^{(i)}(x,x+e_j):=\delta_{ij},\qquad  i,j=1,\dots, n. 
	\end{equation}
\end{proposition}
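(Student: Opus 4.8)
The plan is to prove the two inequalities $\dim\Omega^1_H(\mc C)\ge n$ and $\dim\Omega^1_H(\mc C)\le n$ separately; the value $n$ is of course the first Betti number of the cubical torus, and by Theorem~\ref{th:HT} harmonic forms represent cohomology, but I would argue everything by hand from the explicit formulas of Section~\ref{sec:disop}. For $\dim\Omega^1_H(\mc C)\ge n$ I would verify that each $\vphi^{(i)}$ is harmonic: since $\vphi^{(i)}(x,x+e_j)=\delta_{ij}$ is independent of $x$, every forward difference $\nabla_l\vphi^{(i)}_m(x)$ vanishes, so $d\vphi^{(i)}=0$ by the curl formula of Section~\ref{ss:d 0to1}, and $\delta\vphi^{(i)}(x)=\sum_j\bigl(\vphi^{(i)}(x,x+e_j)-\vphi^{(i)}(x-e_j,x)\bigr)=\sum_j(\delta_{ij}-\delta_{ij})=0$. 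The $\vphi^{(i)}$ are linearly independent because $\vphi^{(i)}$ evaluated on $(x,x+e_j)$ returns $\delta_{ij}$.

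For the reverse inequality I would show that a harmonic $j$ has constant components $j_i(x):=j(x,x+e_i)$, hence lies in $\mathrm{span}\{\vphi^{(i)}\}$. Written out, $dj=0$ reads $j_m(x+e_l)-j_m(x)=j_l(x+e_m)-j_l(x)$ for all $l,m,x$ (curl--free) and $\delta j=0$ reads $\sum_i\bigl(j_i(x)-j_i(x-e_i)\bigr)=0$ for all $x$ (divergence--free). The structural fact I would establish is a discrete Weitzenb\"ock identity, valid because the mesh is the standard unit lattice: the Hodge Laplacian $\Delta^1=d^0\delta^1+\delta^2d^1$ acts componentwise as $(\Delta^1j)_i(x)=-\sum_{l=1}^n\bigl(j_i(x+e_l)-2j_i(x)+j_i(x-e_l)\bigr)$, i.e. it is the direct sum of $n$ copies of the scalar graph Laplacian of $\mathbb T^n_N$. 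Granting this, $\langle\Delta^1j,j\rangle=\|dj\|^2+\|\delta j\|^2=0$ for harmonic $j$ by adjointness \eqref{eq:codef}, and the scalar Laplacian on the connected graph $\mathbb T^n_N$ has kernel exactly the constant functions; hence each $j_i$ is constant and $j=\sum_ij_i\,\vphi^{(i)}$, giving $\dim\Omega^1_H(\mc C)=n$ with the stated basis.

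The main obstacle is checking the Weitzenb\"ock identity: one must combine the explicit forms of $d^0,\delta^1,d^1,\delta^2$ and keep track of which vertex each face $f_m$ is attached to, so that the genuine ``curl of curl'' cross terms in $\delta^2d^1j$ cancel the off--diagonal part of $d^0\delta^1j$, leaving only the pure second differences. A route that sidesteps this bookkeeping is discrete Fourier analysis on $\mathbb Z^n/N\mathbb Z^n$: with $j_i=\sum_\xi\hat j_i(\xi)e^{2\pi\mathrm i\,\xi\cdot x/N}$ and $\omega_l=e^{2\pi\mathrm i\,\xi_l/N}$, curl--freeness becomes $(\omega_l-1)\hat j_m(\xi)=(\omega_m-1)\hat j_l(\xi)$ and divergence--freeness $\sum_i(1-\bar\omega_i)\hat j_i(\xi)=0$; for $\xi\ne0$ pick $k$ with $\omega_k\ne1$, solve the first relation for every $\hat j_m(\xi)$ in terms of $\hat j_k(\xi)$, substitute into the second and use $\sum_i(1-\bar\omega_i)(\omega_i-1)=-\sum_i|1-\omega_i|^2<0$ to force $\hat j_k(\xi)=0$, so only the $\xi=0$ mode survives. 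Either way the components are constant, and since the argument is dimension--free it covers the $n=1,2$ cases as well.
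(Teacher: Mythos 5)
Your proof is correct, but it takes a genuinely different route from the paper's. For the upper bound $\dim\Omega^1_H\le n$ you show directly that a harmonic $1$-form has translation-invariant components; the Fourier computation you sketch (solve the curl-free relation $(\omega_l-1)\hat j_m(\xi)=(\omega_m-1)\hat j_l(\xi)$ for all components in terms of one mode with $\omega_k\ne1$, substitute into the divergence-free relation, and use $\sum_i(1-\bar\omega_i)(\omega_i-1)=-\sum_i|1-\omega_i|^2<0$ to kill every $\xi\ne0$) is complete and dimension-free, so it stands on its own even if you never verify the Weitzenb\"ock identity. The paper argues instead homologically: from the Hodge decomposition, $\Ker d^1=d^0\Omega^0\oplus\Omega^1_H$, hence $\dim\Omega^1_H=\dim\left(\Ker d^1/\Ima d^0\right)$, which by the natural pairing \eqref{eq:pairing} is identified with $\dim\left(\Ker\partial_1/\Ima\partial_2\right)$, i.e.\ the first Betti number $n$ of the torus. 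Your argument buys self-containedness --- the paper's last step (that the discrete first homology of the cubical torus has dimension $n$) is asserted by appeal to the continuous torus rather than proved --- and it yields the stronger structural fact that $\Omega^1_H$ consists exactly of the translation-invariant vector fields; the paper's cohomological argument is shorter and is the one that would transfer to other discrete manifolds once their homology is known, whereas your Fourier argument is tied to the regular lattice structure of $\mathbb{T}^n_N$ (which is anyway the only case the proposition addresses). Two minor remarks: with the paper's convention $\delta j(x)=\sum_{y:(y,x)\in E_N}j(y,x)=-\mathrm{div}\,j(x)$ your expression for $\delta\vphi^{(i)}(x)$ differs by a sign, which is harmless since it vanishes either way; and if you do want the Weitzenb\"ock route, the cancellation of the cross terms of $d^0\delta^1$ and $\delta^2d^1$ genuinely has to be checked, so keep the Fourier argument as the one carrying the proof.
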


\begin{proof}
	For each $ i\in\{1,\dots,n\} $ we have both $ d^1\vphi^{(i)}(f)=0 $ for all $ f\in F_N $ and $ \delta^1 \vphi^{(i)}(x)=0 $ for all $ x\in V_N $, then $ \vphi^{(i)} \in \Omega^1_H $ and the set of discrete vector fields $ \{\vphi^{(i)}(x)\}_i $ generate a \e d-dimensional subspace of $ \Omega_H^1 $.  Since $ \textrm{Ker} \,d^1=d^0\Omega^0 \oplus \Omega^1_H $ we have $ \dim \Omega^1_H=\dim(\textrm{Ker}\, d^1)-\dim(\Ima d^0) $, this is also the dimension of the quotient space $ \Ker d^1/\Ima d^{0}  $  of closed discrete vector fiels (\e 1-forms) differing for  an exact \e 1-form. By duality (see natural pairing \eqref{eq:pairing}) between $ d $ and $ \p $ the quotient space $ \Ker d^1/\Ima d^{0}$ is isomorphic to quotient space of closed \e 1-chains that differ for the  boundary of a \e 2-chain $\Ker \p^1/\Ima \p^{2} $, which dimensionality is  given by the number of independent circle $ S^1 $ in the cartesian product $ S_1\times\dots\times S_1 $ homeomorphic to the continuous torus  $  \mathbb{T}^n $, that is $ n $. Therefore $ \{j_i(x)\}_{i=1}^n $ generates all $ \Omega_H^1 $. \end{proof}

\textbf{Acknowledgements and additional information}

\bigskip

We are grateful to Davide Gabrielli for a very  careful reading of the paper. We also thank Gustavo Granja for
some very useful  discussions.

On behalf of all authors, the corresponding author states that there is no conflict of interest.

\end{document}